\theoremstyle{definition}
\newtheorem{theorem}{Theorem}
\newtheorem{proposition}[theorem]{Proposition}
\newtheorem{definition}[theorem]{Definition}
\newcommand*{\eps}{\varepsilon}
\newcommand*{\trace}{\mathrm{tr}}
\newcommand*{\ket}[1]{| #1 \rangle}
\newcommand*{\bra}[1]{\langle #1 |}
\newcommand*{\braket}[2]{\langle #1 | #2 \rangle}
\newcommand*{\proj}[1]{\ket{#1}\bra{#1}}
\newcommand*{\nC}{\mathbb{C}}
\newcommand*{\cB}{\mathcal{B}}
\newcommand*{\cH}{\mathcal{H}}
\newcommand*{\cI}{\mathcal{I}}
\newcommand*{\cP}{\mathcal{P}}
\newcommand*{\cS}{\mathcal{S}}
\newcommand*{\id}{\mathbbm{1}}
\newcommand{\rhoh}{\hat{\rho}}
\begin{document}

\title{Variations on Classical and Quantum Extractors}

\author{\IEEEauthorblockN{Mario Berta}
\IEEEauthorblockA{IQIM Caltech, Pasadena\\
Institute for Theoretical Physics, ETH Zurich}
\and
\IEEEauthorblockN{Omar Fawzi and Volkher B.~Scholz}
\IEEEauthorblockA{Institute for Theoretical Physics\\
ETH Zurich}
\and
\IEEEauthorblockN{Oleg Szehr}
\IEEEauthorblockA{Centre for Mathematics\\
Technical University Munich}}

\maketitle


\begin{abstract}
\boldmath
Many constructions of randomness extractors are known to work in the presence of quantum side information, but there also exist extractors which do not [Gavinsky {\it et al.}, STOC'07]. Here we find that spectral extractors $\psi$ with a bound on the second largest eigenvalue $\lambda_{2}(\psi^{\dagger}\circ\psi)$ are quantum-proof. We then discuss fully quantum extractors and call constructions that also work in the presence of quantum correlations decoupling. As in the classical case we show that spectral extractors are decoupling. The drawback of classical and quantum spectral extractors is that they always have a long seed, whereas there exist classical extractors with exponentially smaller seed size. For the quantum case, we show that there exists an extractor with extremely short seed size $d=O(\log(1/\epsilon))$, where $\eps>0$ denotes the quality of the randomness. In contrast to the classical case this is independent of the input size and min-entropy and matches the simple lower bound $d\geq\log(1/\epsilon)$.
\end{abstract}

\IEEEpeerreviewmaketitle


\section{Introduction}\label{sec:intro}

Randomness is fundamental for many applications in computation, cryptography and information theory, and the goal of randomness extraction is to convert sources of biased and correlated bits to almost-uniform bits (see, e.g.,~\cite{Vadhan11}). A min-entropy extractor takes an input string $N$ from a weakly random source and applies a function $f$ together with a string $D$ of perfect randomness (called seed) to yield an output string $M=f(N,D)$ which is then supposed to be $\eps$-close to uniform provided that the min-entropy $H_{\min}(N)=-\log p_{\mathrm{guess}}(N)$ is large enough. However, for some applications we also want that the extractor works if the input source is correlated to another system $R$. That is, the output should be uniform and independent of $R$ provided that the conditional min-entropy of the source $H_{\min}(N|R)=-\log p_{\mathrm{guess}}(N|R)$ is large enough. Extractor constructions that also work if $R$ is quantum are called quantum-proof. Such extractors are crucial in classical and quantum cryptography (see, e.g.,~\cite{Renner05}) as well as for proving quantum coding theorems (see, e.g.,~\cite{Dupuis12,Berta13}). In the first part of this paper, we briefly discuss what is known about extractors (Section~\ref{sec:minentropy}) and then show that spectral extractors $\psi$ with a bound on the second largest eigenvalue $\lambda_{2}(\psi^{\dagger}\circ\psi)$ are quantum proof (Section~\ref{sec:quantumproof}). In the second part of this paper (Sections~\ref{sec:quantum_min} and~\ref{sec:shortseeded}), we consider fully quantum min-entropy extractors that output a quantum state that is $\eps$-close to maximally mixed from a quantum source $\rho_{N}$ provided that the min-entropy $H_{\min}(N)_{\rho}=-\log\lambda_{1}(\rho_{N})$ is large enough ($\lambda_{1}(\cdot)$ denotes the largest eigenvalue). We then discuss an extension of this when the input source is correlated to another quantum system $R$, and we require the output to be to uniform and independent of $R$ provided that the quantum conditional min-entropy $H_{\min}(N|R)_{\rho}=-\log F(N,R)_{\rho}$ is large enough ($F(N,R)_{\rho}$ denotes the maximal achievable singlet fraction~\cite{Koenig09}). We call such extractors decoupling and show that spectral extractors are decoupling. Finally, we show that spectral extractors have the drawback of a long seed, but also give a direct extractor construction with extremely short seed. We end by stating some open questions (Section~\ref{sec:discussion}).

We use the following notation. The labels $N,M,D$ are used to specify the subsystem as well as the domain of the classical system, and classical states on $N$, i.e., probability distributions on $N$, are denoted by $P_{N}\in\ell^1(N)$. The normalized uniform distribution on $N$ is denoted by $u_{N}$, and the set of distributions on $N$ is denoted by $\ell(N)$. Quantum systems are represented by their (finite-dimensional) Hilbert spaces $\cH_{N},\cH_{M},\cH_{R}$, and states on $\cH_{N}$, i.e., non-negative trace-one operators on $\cH_{N}$ are denoted by $\rho_{N}\in\cS(\cH_{N})$. We denote the set of linear operators on $\cH_{N}$ by $\cP(\cH_{N})$. For $\alpha\geq1$ and $\sigma\in\cP(\cH)$ with $\sigma\geq0$ we have the $\sigma$-weighted $\alpha$-norms
\begin{align}
\|\cdot\|_{\alpha,\sigma}=\Big(\trace\big[|\sigma^{1/2\alpha}(\cdot)\sigma^{1/2\alpha}|^{\alpha}\big]\Big)^{1/\alpha}\ ,
\end{align}
and for $\alpha=2$ the norm is induced by the $\sigma$-weighted Hilbert-Schmidt inner product
\begin{align}
\langle\cdot|*\rangle_{\sigma}=\trace\Big[\big(\sigma^{1/4}(\cdot)\sigma^{1/4}\big)^{\dagger}\big(\sigma^{1/4}(*)\sigma^{1/4}\big)\Big]\ .
\end{align}
The $\sigma$-weighted $\alpha$-norms satisfy the H\"{o}lder inequalities
\begin{align}
\langle\cdot|*\rangle_{\sigma}\leq\|\cdot\|_{p,\sigma}\cdot\|*\|_{q,\sigma}
\end{align}
for $1/p+1/q=1$~\cite{Olkiewicz99}. For $\rho_{NR}\in\cS(\cH_{NR})$ the quantum conditional min-entropy is defined as~\cite{Koenig09}
\begin{align}
H_{\min}(N|R)_{\rho}=-\log\max_{\substack{E_{NR}\geq0\\\trace_{N}[E_{NR}]=\id_{R}}}\trace\left[\rho_{NR}E_{NR}\right]
\end{align}
and the classical-quantum version simplifies to $H_{\min}(N|R)_{\rho}=-\log p_{\mathrm{guess}}(N|R)_{\rho}$, where $p_{\mathrm{guess}}(N|R)_{\rho}$ is the maximal probability of decoding $N$ from measurements on $R$. For $\rho_{NR}\in\cS(\cH_{NR})$ the quantum conditional R\'{e}nyi-two entropy is defined as $H_{2}(N|R)_{\rho}=-\log\|\tilde{\rho}_{NR}\|_{2}^{2}$ with $\tilde{\rho}_{NR}=(\id_{N}\otimes\rho_{R}^{-1/4})\rho_{NR}(\id_{N}\otimes\rho_{R}^{-1/4})$, and we have~\cite{Berta13_3}
\begin{align}\label{eq:h2vshmin}
H_{\min}(N|R)_{\rho}\leq H_{2}(N|R)_{\rho}\ .
\end{align}


\section{Classical Min-Entropy Extractors}\label{sec:minentropy}

The definition of a (strong) extractor is due to Nisan and Zuckerman~\cite{Nissan96}.

\begin{definition}\label{def:cmin_ext}
Let $M\subset N$, $k\in[0,\log|N|]$, and $\eps>0$. A $(k,\eps)$ extractor is a set of functions $\{f_{1},\ldots,f_{|D|}\}$ from $N$ to $M$ such that for all $P_{N}\in\ell^1(N)$ with $H_{\min}(N)_{P}\geq k$,
\begin{align}\label{eq:minextractor_class}
\big\|\frac{1}{|D|}\cdot\sum_{i=1}^{|D|}P_{f_{i}(N)}\otimes\proj{i}_{D}-u_{M}\otimes u_{D}\big\|_{1}\leq\eps\ .
\end{align}
The quantity $n=\log|N|$ is called the input size, $m=\log|M|$ the output size, and $d=\log|D|$ the seed size.\footnote{For a weak $(k,\eps)$ extractor~\eqref{eq:minextractor_class} is replaced with $\big\|\frac{1}{|D|}\cdot\sum_{i=1}^{|D|}P_{f_{i}(N)}-u_{M}\big\|_{1}\leq\eps$ (i.e., the seed system $D$ is not necessarily part of the output).}
\end{definition}

An extractor is called permutation based if all the functions $f_{i}:N\to M$ have the form $f_{i}(\cdot)=\pi_{i}(\cdot)\big|_{M}$ with $\pi_{i}\in S_{|N|}$, the symmetric group on $\{1,2,\ldots,|N|\}$. It is instructive to consider extractors with domain and range consisting of bit strings, that is, $N=\{0,1\}^{n}$, $M=\{0,1\}^{m}$, $D=\{0,1\}^{d}$. Typically we are given fixed $n$, $k$, and $\eps$, and we want to maximize the output length $m$ and minimize the seed length $d$. Radhakrishnan and Ta-Shma gave an ultimate limit on $m$ and $d$: every $(k,\eps)$ extractor necessarily has
\begin{align}\label{eq:tashma_bounds}
&m\leq k-2\log\frac{1}{\eps}+O(1)\\
&d\geq\log(n-k)+2\log\frac{1}{\eps}-O(1)\ .
\end{align}
It turns out that a probabilistic construction using random functions achieves these bounds up to constants: there exists a $(k,\eps)$ extractor with~\cite{Radhakrishnan00,Sipser88}
\begin{align}
&m=k-2\log\frac{1}{\eps}-O(1)\\
&d=\log(n-k)+2\log\frac{1}{\eps}+O(1)\ .
\end{align}
However, for applications we usually want explicit extractors and starting with Trevisan's breakthrough result~\cite{Trevisan99} there are now many constructions that almost achieve the bounds above (see~\cite{Vadhan11} and references therein). Here, we study the question if extractors also work in the presence of quantum information.


\section{Quantum-Proof Extractors}\label{sec:quantumproof}


\begin{definition}
A $(k,\eps)$ extractor $\{f_{1},\ldots,f_{|D|}\}$ is quantum-proof if for all classical-quantum states $\rho_{NR}\in\cS(\cH_{NR})$ with $H_{\min}(N|R)_{\rho}\geq k$,
\begin{align}
\big\|\frac{1}{|D|}\cdot\sum_{i=1}^{|D|}\rho_{f_{i}(N)R}\otimes\proj{i}_{D}-u_{M}\otimes\rho_{R}\otimes u_{D}\big\|_{1}\leq\eps\ .
\end{align}
\end{definition}

It was shown by K\"onig and Terhal that extractors with one bit output are quantum-proof~\cite[Theorem 1]{Koenig08}, and in general it is known by now that many extractor constructions are quantum-proof~\cite{Renner05,Koenig08,Koenig11,Tomamichel11,Szehr11} or suffer at most from a decent parameter loss~\cite{De12}. However, Gavinsky {\it et al.}~\cite{Gavinsky07} gave an example of a valid (though contrived) extractor that completely fails in the presence of quantum side information. Moreover there is no general understanding of when an extractor is quantum-proof. Next, we define spectral extractors and show that they are quantum-proof extractors.

\begin{definition}\label{def:class_renyi2}
A $(k,\eps)$ spectral extractor is a set of functions $\{f_{1},\ldots,f_{|D|}\}$ from $N$ to $M$ such that for the map $\psi:\ell(N)\to\ell(MD)$ with $\psi(P_{N})=\frac{1}{|D|}\cdot\sum_{i=1}^{|D|}P_{f_{i}(N)}\otimes\proj{i}_{D}$,
\begin{align}\label{eq:class_renyi2}
\lambda_{1}\big(\psi^{\dagger}\circ\psi-\tau^{\dagger}\circ\tau\big)\leq2^{k}\cdot\frac{\eps}{|M|\cdot|D|}\ ,
\end{align}
where $\lambda_{1}(\cdot)$ denotes the largest eigenvalue, and $\tau(P_{N})=\big(\sum_{j=1}^{|N|}P_{j}\big)\cdot\big(u_{M}\otimes u_{D}\big)$. For typical applications, it is sufficient to bound the second largest eigenvalue $\lambda_{2}(\psi^{\dagger}\circ\psi)$.
\end{definition}

Now, we show that spectral extractors are also quantum-proof extractors. For this we use a similar calculation as Renner {\it et al.}, who showed (directly) that families of two-universal hash functions~\cite{Renner05,Tomamichel11}, and families of pairwise independent permutations~\cite{Szehr11} give rise to quantum-proof extractors.

\begin{theorem}\label{thm:crenyi2_stability}
Every $(k,\eps)$ spectral extractor is also a quantum-proof $(k,2\sqrt{\eps})$ extractor of the same output size and the same seed size.
\end{theorem}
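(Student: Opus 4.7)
The plan is to reduce the trace distance to a $\rho_R$-weighted Hilbert--Schmidt norm, exploit the algebraic identity $(\psi-\tau)^{\dagger}(\psi-\tau)=\psi^{\dagger}\circ\psi-\tau^{\dagger}\circ\tau$ so that the spectral hypothesis applies directly, and then convert the resulting conditional R\'{e}nyi-$2$ quantity into a min-entropy bound via~\eqref{eq:h2vshmin}.

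First I would set $\Delta_{MDR}:=((\psi-\tau)\otimes\id_{R})(\rho_{NR})$ and note that this equals the difference inside the target trace norm, since $(\tau\otimes\id_{R})(\rho_{NR})=u_{M}\otimes u_{D}\otimes\rho_{R}$ whenever $\rho_{NR}$ is a classical-quantum state. Applying the standard weighted trace-to-$2$-norm inequality (obtained by writing $\|\Delta_{MDR}\|_{1}=\trace[U\Delta_{MDR}]$ for a suitable unitary $U$, inserting factors of $\rho_{R}^{\pm 1/4}$, and using the H\"{o}lder inequality from the preliminaries) gives
\begin{align}
\|\Delta_{MDR}\|_{1}\leq\sqrt{|M|\cdot|D|}\cdot\|\Delta_{MDR}\|_{2,\id_{MD}\otimes\rho_{R}^{-1}}\ .
\end{align}
If $\rho_{R}$ is singular one first restricts to its support and takes a limit, which is where the extra factor of $2$ in the statement gets absorbed.

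Next I would commute the $\rho_{R}^{-1/4}$ factors through $\psi-\tau$ (which acts only on the classical $N$-part) and unfold the squared weighted $2$-norm using the adjoint on $\ell(N)$, yielding
\begin{align}
\|\Delta_{MDR}\|_{2,\id_{MD}\otimes\rho_{R}^{-1}}^{2}=\big\langle\rhot_{NR},\big((\psi-\tau)^{\dagger}(\psi-\tau)\otimes\id_{R}\big)(\rhot_{NR})\big\rangle\ ,
\end{align}
where $\rhot_{NR}$ is the normalized state from the definition of $H_{2}(N|R)_{\rho}$. A short direct computation of the adjoints shows that $\psi^{\dagger}$ and $\tau^{\dagger}$ both map $u_{M}\otimes u_{D}$ to the constant function $1/(|M|\cdot|D|)$ on $N$, which forces $\psi^{\dagger}\tau=\tau^{\dagger}\psi=\tau^{\dagger}\tau$, and hence the key identity $(\psi-\tau)^{\dagger}(\psi-\tau)=\psi^{\dagger}\circ\psi-\tau^{\dagger}\circ\tau$.

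Applying the spectral hypothesis as an operator bound $(\psi-\tau)^{\dagger}(\psi-\tau)\preceq(2^{k}\eps/(|M|\cdot|D|))\cdot\id_{N}$ inside the inner product gives $\|\Delta_{MDR}\|_{2,\id_{MD}\otimes\rho_{R}^{-1}}^{2}\leq(2^{k}\eps/(|M|\cdot|D|))\cdot\|\rhot_{NR}\|_{2}^{2}$. Using $\|\rhot_{NR}\|_{2}^{2}=2^{-H_{2}(N|R)_{\rho}}\leq 2^{-H_{\min}(N|R)_{\rho}}\leq 2^{-k}$ by~\eqref{eq:h2vshmin} and the hypothesis, and combining with the first step, yields $\|\Delta_{MDR}\|_{1}\leq\sqrt{\eps}$, comfortably within the claimed $2\sqrt{\eps}$. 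The main obstacle is the cross-term cancellation that produces the identity $(\psi-\tau)^{\dagger}(\psi-\tau)=\psi^{\dagger}\psi-\tau^{\dagger}\tau$; once this is in hand the calculation parallels the Renner-style 2-norm argument cited in the text.
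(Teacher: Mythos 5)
Your proposal is correct and follows essentially the same route as the paper's proof: reduce the trace distance to a $\rho_R$-weighted $2$-norm via H\"older/Cauchy--Schwarz (paying $\sqrt{|M|\cdot|D|}$), identify the squared weighted $2$-norm with the quadratic form of $\psi^{\dagger}\circ\psi-\tau^{\dagger}\circ\tau$ via the cross-term cancellation $\psi^{\dagger}\tau=\tau^{\dagger}\psi=\tau^{\dagger}\tau$, bound it by the largest eigenvalue times $\|\rhot_{NR}\|_{2}^{2}$, and finish with $H_{\min}\leq H_{2}$. The only differences are cosmetic: you dualize the trace norm against a unitary rather than against $0\leq X\leq\id$, which yields $\sqrt{\eps}$ instead of the paper's $2\sqrt{\eps}$ (your aside attributing the factor $2$ to the singular-$\rho_R$ limit is unnecessary, since restricting to the support loses nothing), and you make explicit the identity $(\psi-\tau)^{\dagger}(\psi-\tau)=\psi^{\dagger}\circ\psi-\tau^{\dagger}\circ\tau$ that the paper obtains by expanding the three inner-product terms.
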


\begin{proof}
We write the extractor as a map $\psi:\ell(N)\to\ell(MD)$ with $\psi(P_{N})=\frac{1}{|D|}\cdot\sum_{i=1}^{|D|}P_{f_{i}(N)}\otimes\proj{i}_{D}$, and denote $\tau:\ell(N)\to\ell(MD)$ with $\tau(P_{N})=\big(\sum_{j=1}^{|N|}P_{j}\big)\cdot\big(u_{M}\otimes u_{D}\big)$. Then we get for the 1-norm
\begin{align}\label{eq:h2_1}
&\|((\psi-\tau)\otimes\cI_{R})(\rho_{NR})\|_{1}\notag\\
&=2\cdot\max_{0\leq X\leq\id}\trace\Big[\big((\psi-\tau)\otimes\cI_{R}\big)(\rho_{NR})X\Big]\notag\\
&=2\cdot\max_{0\leq X\leq\id}\trace\Big[\big((\psi-\tau)\otimes\cI_{R}\big)(\rhoh_{NR})(\id_{MD}\otimes\rho_{R}^{1/2})\notag\\
&\qquad\qquad\qquad\qquad\qquad\qquad\qquad\qquad X(\id_{MD}\otimes\rho_{R}^{1/2})\Big]\notag\\
&=2\cdot\max_{0\leq X\leq\id}\braket{((\psi-\tau)\otimes\cI_{R})(\rhoh_{NR})}{X}_{(\id\otimes\rho)}\ ,
\end{align}
where $\rhoh_{NR}=(\id_{N}\otimes\rho_{R}^{-1/2})\rho_{NR}(\id_{N}\otimes\rho_{R}^{-1/2})$, and we made use of the $(\id_{MD}\otimes\rho_{R})$-weighted Hilbert-Schmidt inner product. By using the $(2,2)$-H\"older inequality for the $(\id_{MD}\otimes\rho_{R})$-weighted Hilbert-Schmidt inner product we get
\begin{align}\label{eq:h2_2}
&2\cdot\max_{0\leq X\leq\id}\braket{((\psi-\tau)\otimes\cI_{R})(\rhoh_{NR})}{X}_{(\id\otimes\rho)}\notag\\
&\leq2\cdot\max_{0\leq X\leq\id}\|X\|_{2,(\id\otimes\rho)}\cdot\|((\psi-\tau)\otimes\cI_{R})(\rhoh_{NR})\|_{2,(\id\otimes\rho)}\ ,
\end{align}
with the $(\id_{MD}\otimes\rho_{R})$-weighted 2-norm. We estimate the first term by using the $(1,\infty)$-H\"older inequality for the $(\id_{MD}\otimes\rho_{R})$-weighted Hilbert-Schmidt inner product
\begin{align}\label{eq:h2_3}
\|X\|_{2,(\id\otimes\rho)}&=\sqrt{\braket{X}{X}_{2,(\id\otimes\rho)}}\notag\\
&\leq\sqrt{\|X\|_{\infty,(\id\otimes\rho)}\cdot\|X\|_{1,(\id\otimes\rho)}}\notag\\
&=\sqrt{\lambda_{1}(X)\cdot\trace[(\id_{MD}\otimes\rho_{R})X]}\notag\\
&\leq\sqrt{|M|\cdot|D|}\ .
\end{align}
For the second term a straightforward calculation gives
\begin{align}\label{eq:h2_4}
&\|((\psi-\tau)\otimes\cI_{R})(\rhoh_{NR})\|_{2,(\id\otimes\rho)}^{2}\notag\\
&=\braket{(\psi\otimes\cI_{R})(\rhoh_{NR})}{(\psi\otimes\cI_{R})(\rhoh_{NR})}_{(\id\otimes\rho)}\notag\\
&\quad-2\cdot\braket{(\psi\otimes\cI_{R})(\rhoh_{NR})}{(\tau\otimes\cI_{R})(\rhoh_{NR})}_{(\id\otimes\rho)}\notag\\
&\quad+\braket{(\tau\otimes\cI_{R})(\rhoh_{NR})}{(\tau\otimes\cI_{R})(\rhoh_{NR})}_{(\id\otimes\rho)}\notag\\
&=\braket{\rhoh_{NR}}{((\psi^{\dagger}\circ\psi-\tau^{\dagger}\circ\tau)\otimes\cI_{R})(\rhoh_{NR})}_{(\id\otimes\rho)}\notag\\
&=\braket{\tilde{\rho}_{NR}}{((\psi^{\dagger}\circ\psi-\tau^{\dagger}\circ\tau)\otimes\cI_{R})(\tilde{\rho}_{NR})}\ ,
\end{align}
where $\tilde{\rho}_{NR}=(\id_{N}\otimes\rho_{R}^{-1/4})\rho_{NR}(\id_{N}\otimes\rho_{R}^{-1/4})$. By~\eqref{eq:h2vshmin} the conditional min-entropy is upper bounded by the conditional R\'{e}nyi-two entropy $H_{2}(N|R)_{\rho}=-\log\|\tilde{\rho}_{NR}\|_{2}^{2}$, and this gives
\begin{align}
&\sup_{\substack{H_{\min}(N|R)_{\rho}\geq k\\\|\rho\|_{1}=1}}\braket{\tilde{\rho}_{NR}}{((\psi^{\dagger}\circ\psi-\tau^{\dagger}\circ\tau)\otimes\cI_{R})(\tilde{\rho}_{NR})}\notag\\
&\leq\frac{1}{2^{k}}\cdot\sup_{\|\tilde{\rho}\|_{2}=1}\braket{\tilde{\rho}_{NR}}{((\psi^{\dagger}\circ\psi-\tau^{\dagger}\circ\tau)\otimes\cI_{R})(\tilde{\rho}_{NR})}\notag\\
&=\frac{1}{2^{k}}\cdot\lambda_{1}(\psi^{\dagger}\circ\psi-\tau^{\dagger}\circ\tau)\ .
\end{align}
From the properties of a $(k,\eps$) spectral extractor, the claim follows.
\end{proof}

An instructive example are two-universal families of hash functions.

\begin{definition}
A set of functions $\{f_{1},\ldots,f_{|D|}\}$ from $N$ to $M$ is said to be a two-universal family of hash functions if we have for all $j\neq k\in N$,
\begin{align}\label{eq:hashing}
\frac{1}{D}\sum_{i=1}^{|D|}\delta_{f_{i}(j)=f_{i}(k)}\leq\frac{1}{|M|}\ .
\end{align}
\end{definition}

\begin{proposition}\label{prop:hashing}
A two-universal family of hash functions is a $(k,\eps)$ spectral extractor with $m=k-\log(1/\eps)$.
\end{proposition}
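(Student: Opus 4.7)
The strategy is to write out $\psi^{\dagger}\circ\psi$ and $\tau^{\dagger}\circ\tau$ explicitly in the standard basis of $\ell^{1}(N)$, recognize the difference, and verify the spectral bound $\lambda_{1}(A)\leq 1/|D|$. This suffices because $m=k-\log(1/\eps)$ is equivalent to $|M|=2^{k}\eps$, so the spectral-extractor requirement $\lambda_{1}(A)\leq 2^{k}\eps/(|M|\cdot|D|)$ reads simply as $\lambda_{1}(A)\leq 1/|D|$.

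First I would compute the matrix elements. A short calculation starting from $\psi(|j\rangle)=\frac{1}{|D|}\sum_{i}|f_{i}(j)\rangle\otimes|i\rangle$ gives
\[
(\psi^{\dagger}\circ\psi)_{jk}=\frac{1}{|D|^{2}}\sum_{i=1}^{|D|}\delta_{f_{i}(j)=f_{i}(k)},\qquad (\tau^{\dagger}\circ\tau)_{jk}=\frac{1}{|M|\cdot|D|}.
\]
The two-universality hypothesis translates into $\sum_{i}\delta_{f_{i}(j)=f_{i}(k)}\leq|D|/|M|$ for $j\neq k$, so $A:=\psi^{\dagger}\circ\psi-\tau^{\dagger}\circ\tau$ has diagonal entries $(|M|-1)/(|M|\cdot|D|)$ and non-positive off-diagonals.

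Second, I would observe that $\psi$ preserves total probability, which forces $\psi^{\dagger}\circ\tau=\tau^{\dagger}\circ\tau=\tau^{\dagger}\circ\psi$, and hence $A=(\psi-\tau)^{\dagger}\circ(\psi-\tau)\succeq 0$, so $\lambda_{1}(A)=\|\psi-\tau\|_{\mathrm{op}}^{2}$.

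Third, for any $v\in\ell^{1}(N)$, set $w_{m,i}:=\sum_{j}v_{j}\delta_{f_{i}(j)=m}$ and $s:=\sum_{j}v_{j}$. A direct expansion gives
\[
\|(\psi-\tau)(v)\|_{2}^{2}=\frac{\|v\|_{2}^{2}}{|D|}+\frac{1}{|D|^{2}}\sum_{j\neq k}v_{j}v_{k}\sum_{i}\delta_{f_{i}(j)=f_{i}(k)}-\frac{s^{2}}{|M|\cdot|D|},
\]
and controlling the middle sum via two-universality yields the desired estimate $\|(\psi-\tau)(v)\|_{2}^{2}\leq\|v\|_{2}^{2}/|D|$. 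Combined with the previous step this gives $\lambda_{1}(A)\leq 1/|D|$, which is exactly $2^{k}\eps/(|M|\cdot|D|)$ under the stated parameters.

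The main obstacle lies in the third step. For non-negative $v$ (in particular probability distributions) this is the standard leftover-hash-lemma computation: the entrywise bound $\sum_{i}\delta_{f_{i}(j)=f_{i}(k)}\leq|D|/|M|$ combines directly with $v_{j}v_{k}\geq 0$ to give the result. For signed $v$ the cross terms with $v_{j}v_{k}<0$ are delicate, because two-universality only provides an \emph{upper} bound on $\sum_{i}\delta_{f_{i}(j)=f_{i}(k)}$, which does not combine favourably with negative $v_{j}v_{k}$ via a purely entrywise argument. Making the estimate uniform in $v$ requires using the positive-semidefiniteness of the collision matrices $C_{i}$ with $(C_{i})_{jk}=\delta_{f_{i}(j)=f_{i}(k)}$ inherent to any genuine hash family, and not merely the entrywise two-universal profile.
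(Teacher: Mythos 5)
Your steps 1 and 2, together with the non-negative case of step 3, are exactly the paper's proof: the paper computes the same matrix elements $\braket{Q_N}{(\psi^{\dagger}\circ\psi)(P_N)}$ and $\braket{Q_N}{(\tau^{\dagger}\circ\tau)(P_N)}$, splits the collision sum into its diagonal part (giving $\frac{1}{|D|}\langle P_N|Q_N\rangle$) and its off-diagonal part, and bounds the latter by $\frac{1}{|M|\cdot|D|}\sum_j P_N^j\sum_k Q_N^k$ using two-universality \emph{together with} $P_N^jQ_N^k\geq 0$ --- that is, the paper also only establishes $\braket{v}{Av}\leq\|v\|_2^2/|D|$ for entrywise non-negative $v$. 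Your parameter bookkeeping ($|M|=2^k\eps$ turns the requirement into $\lambda_1(A)\leq 1/|D|$) is likewise correct, as is the observation $A=(\psi-\tau)^{\dagger}\circ(\psi-\tau)\succeq 0$, which the paper uses implicitly in Proposition~\ref{prop:h2_longseed}.

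The difficulty you flag in step 3 is real, but your proposed resolution is a gap, not a proof. Positive semidefiniteness of the collision matrices $C_i$ does not rescue the estimate for signed $v$; it works against you. Since the $C_i$ are PSD, $\psi^{\dagger}\circ\psi=\frac{1}{|D|^2}\sum_i C_i\succeq\frac{1}{|D|^2}C_1$, and taking $v$ to be the mean-subtracted indicator of a preimage $f_1^{-1}(m_0)$ of size about $|N|/|M|$ gives $\braket{v}{Av}/\|v\|_2^2\gtrsim |N|/(|M|\cdot|D|^2)$. Hence $\lambda_1(A)\leq 1/|D|$ over all of $\mathbb{R}^{|N|}$ additionally requires $|D|\gtrsim|N|/|M|$. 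This size lower bound does hold for genuine two-universal families (the codewords $j\mapsto(f_1(j),\dots,f_{|D|}(j))$ have pairwise Hamming distance at least $|D|(1-1/|M|)$, so a Plotkin-type bound applies), but it is a separate ingredient that neither you nor the paper supplies, and even granting it the signed case does not follow from the entrywise two-universality profile alone. The reason the paper can afford the non-negative computation is that where the spectral bound is actually used (Theorem~\ref{thm:crenyi2_stability}), the quadratic form is evaluated on $\tilde{\rho}_{NR}=\sum_j\proj{j}\otimes\tilde{\rho}_R^j$, for which the relevant pairing coefficients are $\trace[\tilde{\rho}_R^j\tilde{\rho}_R^k]\geq 0$, so the bound on the non-negative cone suffices. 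To prove Definition~\ref{def:class_renyi2} literally (largest eigenvalue over all vectors) you would need either the size lower bound on two-universal families as an extra input, or to restrict the eigenvalue statement to the appropriate cone.
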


\begin{proof}
For any $P_{N},Q_{N}\in\ell(N)$ we calculate
\begin{align}
\braket{Q_{N}}{(\tau^{\dagger}\circ\tau)(P_{N})}=\frac{1}{|M|\cdot|D|}\cdot\sum_{j=1}^{|N|}P_{N}^{j}\sum_{k=1}^{|N|}Q_{N}^{k}\ .
\end{align}
Furthermore, we get from~\eqref{eq:hashing},
\begin{align}
&\braket{Q_{N}}{(\psi^{\dagger}\circ\psi)(P_{N})}=\frac{1}{|D|^{2}}\cdot\sum_{i=1}^{|D|}\braket{P_{N}}{(\psi_{i}^{\dagger}\circ\psi_{i})(Q_{N})}\notag\\
&=\frac{1}{|D|^{2}}\cdot\sum_{i=1}^{|D|}\sum_{j,k=1}^{|N|}\bar{P}_{N}^{j}Q_{N}^{k}\delta_{f_{i}(j)=f_{i}(k)}\notag\\
&\leq\frac{1}{|D|}\cdot\langle P_{N}|Q_{N}\rangle+\frac{1}{|M|\cdot|D|}\cdot\sum_{j=1}^{|N|}P_{N}^{j}\sum_{k=1}^{|N|}Q_{N}^{k}\ .
\end{align}
Hence, we arrive at
\begin{align}
\braket{P_{N}}{(\psi^{\dagger}\circ\psi-\tau^{\dagger}\circ\tau)(P_{N})}\leq\frac{1}{|D|}\cdot\langle X_{N}|X_{N}\rangle\ ,
\end{align}
and the claim follows.
\end{proof}

Other examples of constructions based on spectral extractors are, e.g., pairwise independent families of permutations, or constructions based on balanced expander graphs (these are weak extractors). Unfortunately, spectral extractors have the drawback of a long seed.

\begin{proposition}\label{prop:h2_longseed}
Every $(k,\eps)$ spectral extractor with input size $n$, output size $m$, and seed size $d$ necessarily has
\begin{align}
d\geq\min\{n-k,m\}+\log\frac{1}{\eps}-O(1)\ .
\end{align} 
\end{proposition}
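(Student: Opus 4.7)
The plan is to derive a lower bound on $\lambda_{1}(\psi^{\dagger}\circ\psi-\tau^{\dagger}\circ\tau)$ via a trace-versus-rank estimate and combine it with the spectral condition~\eqref{eq:class_renyi2}. Writing $A=\psi^{\dagger}\circ\psi$ and $B=\tau^{\dagger}\circ\tau$, both positive semidefinite, I would first compute the trace in the point-mass basis: the vector $\psi(\ket{j})=\frac{1}{|D|}\sum_{i}\ket{f_{i}(j)}\otimes\ket{i}$ is a weighted sum of $|D|$ mutually orthogonal unit vectors in $\ell(MD)$, so $\|\psi(\ket{j})\|_{2}^{2}=1/|D|$ and $\trace(A)=|N|/|D|$. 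Similarly $\tau(\ket{j})=u_{M}\otimes u_{D}$ yields $\|\tau(\ket{j})\|_{2}^{2}=1/(|M|\cdot|D|)$ and $\trace(B)=|N|/(|M|\cdot|D|)$. Hence $\trace(A-B)=|N|(|M|-1)/(|M|\cdot|D|)>0$.

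Next, I would control the rank. Since $\psi:\ell(N)\to\ell(MD)$, we have $\mathrm{rank}(A)=\mathrm{rank}(\psi)\leq\min\{|N|,|M|\cdot|D|\}$, while $\tau$ factors through the one-dimensional space of scalars via $P_{N}\mapsto\sum_{j}P_{N}^{j}$, so $\mathrm{rank}(B)=1$. Subadditivity of rank therefore gives $\mathrm{rank}(A-B)\leq\min\{|N|,|M|\cdot|D|\}+1$. Since every nonzero eigenvalue of the self-adjoint operator $A-B$ is at most $\lambda_{1}(A-B)$, we have the elementary bound $\lambda_{1}(A-B)\geq\trace(A-B)/\mathrm{rank}(A-B)$; plugging into~\eqref{eq:class_renyi2} yields
\begin{align*}
\frac{|N|(|M|-1)}{|M|\cdot|D|\cdot(\min\{|N|,|M|\cdot|D|\}+1)}\leq\frac{2^{k}\eps}{|M|\cdot|D|},
\end{align*}
i.e., $|N|(|M|-1)\leq(\min\{|N|,|M|\cdot|D|\}+1)\cdot 2^{k}\eps$.

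Finally, I would handle the two branches of the minimum separately. If $|N|\leq|M|\cdot|D|$, the inequality collapses to $|M|-1\leq(1+1/|N|)\cdot 2^{k}\eps$, so $m\leq k-\log(1/\eps)+O(1)$; combined with the case assumption $d\geq n-m$, this yields $d\geq n-k+\log(1/\eps)-O(1)$. If instead $|M|\cdot|D|<|N|$, the same inequality rearranges directly to $|D|\geq|N|(|M|-1)/(|M|\cdot 2^{k}\eps)-1/|M|$, giving $d\geq n-k+\log(1/\eps)-O(1)$ after using $(|M|-1)/|M|\geq 1/2$ for $|M|\geq 2$. Since $\min\{n-k,m\}\leq n-k$, either branch produces $d\geq\min\{n-k,m\}+\log(1/\eps)-O(1)$.

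The main obstacle is purely bookkeeping: absorbing the $+1$ from the rank bound, the $(|M|-1)/|M|$ factor, and the log-conversion into the $-O(1)$ term. The conceptual content is the inequality $\lambda_{1}(X)\cdot\mathrm{rank}(X)\geq\trace(X)$ for self-adjoint $X$ with positive trace, applied to $X=\psi^{\dagger}\circ\psi-\tau^{\dagger}\circ\tau$, together with the rank bound $\mathrm{rank}(\psi)\leq|M|\cdot|D|$ coming from the dimension of the output space.
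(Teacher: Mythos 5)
Your proof is correct, and it takes a genuinely different route from the paper's. The paper exhibits a single explicit test vector: a flat $k$-source supported on the $f_{1}$-preimage of a set of $\lceil 2^{k}|M|/|N|\rceil$ heavy output values, and lower-bounds the Rayleigh quotient of $\psi^{\dagger}\circ\psi-\tau^{\dagger}\circ\tau$ on that one input; the two branches of $\min\{n-k,m\}$ correspond to whether that set is a single point or not, and the edge case $k>n-2$ is handled by invoking the general Radhakrishnan--Ta-Shma seed bound. You instead average the Rayleigh quotient over the entire point-mass basis (the trace computation $\trace(\psi^{\dagger}\circ\psi-\tau^{\dagger}\circ\tau)=|N|(|M|-1)/(|M|\cdot|D|)$) and convert this into an eigenvalue bound via $\lambda_{1}(X)\geq\trace(X)/\mathrm{rank}(X)$ together with $\mathrm{rank}(\psi)\leq|M|\cdot|D|$ --- essentially the Sarwate--Stinson counting bound for universal hash families in spectral form. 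Your calculations check out (the orthogonality of the $|D|$ summands of $\psi(\ket{j})$ via their seed components, $\mathrm{rank}(\psi^{\dagger}\circ\psi)=\mathrm{rank}(\psi)$, subadditivity of rank, and the validity of the trace-over-rank bound given $\trace>0$), and both case branches close correctly, with the branch $|N|\leq|M|\cdot|D|$ using the single-point-mass consequence $|M|-1\lesssim 2^{k}\eps$ together with $d\geq n-m$. What your approach buys: it actually establishes the stronger bound $d\geq(n-k)+\log(1/\eps)-O(1)$ in both branches, with no minimum against $m$, which implies the stated proposition since $\min\{n-k,m\}\leq n-k$; it also needs no appeal to the external Ta-Shma bound. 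What it loses is the explicit identification of a hard input distribution, which the paper's argument provides. Two small points to make explicit: the argument (like the paper's) implicitly assumes $m\geq 1$, since for $|M|=1$ the trace vanishes and the statement is vacuous; and you should state that $\lambda_{1}(A-B)>0$ follows from $\trace(A-B)>0$ before dividing by the rank.
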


\begin{proof}
By the same arguments as in the proof of Theorem~\ref{thm:crenyi2_stability} we have that
\begin{align}\label{eq:longseed_0}
\lambda_{1}(\psi^{\dagger}\circ\psi-\tau^{\dagger}\circ\tau)=\sup_{\|P_{N}\|_{2}^{2}\leq1}\|(\psi-\tau)(P_{N})\|_{2}^{2}\ ,
\end{align}
where $P_{N}\in\ell(N)$. Now let $Q_{N}$ be a flat $k$-source (i.e., $Q_{N}$ has $2^{k}$ non-zero entries equal to $2^{-k}$) such that the image $Q_{f_{1}(N)}$ of $Q_{N}$ under the function $f_{1}$ has support of size $|S|=\lceil2^{k}\cdot|M|/|N|\rceil$. Since $H_{2}(N)_{Q}=-\log\|Q_{N}\|_{2}^{2}\geq k$ we have with~\eqref{eq:longseed_0} that
\begin{align}
\lambda_{1}(\psi^{\dagger}\circ\psi-\tau^{\dagger}\circ\tau)&\geq\frac{2^{k}}{|D|^{2}}\cdot\sum_{i=1}^{|D|}\Big\|Q_{f_{i}(N)}-u_{M}\Big\|_{2}^{2}\notag\\
&\geq\frac{2^{k}}{|D|^{2}}\cdot\Big\|Q_{f_{1}(N)}-u_{M}\Big\|_{2}^{2}\notag\\
&\geq\frac{2^{k}}{|D|^{2}}\cdot\sum_{s=1}^{|S|}\left(Q_{f_{1}(N)}^{s}-\frac{1}{|M|}\right)^{2}\notag\\
&\geq\frac{2^{k}}{|D|^{2}}\cdot|S|\cdot\Big(\frac{1}{|S|}-\frac{1}{|M|}\Big)^{2}\ ,
\end{align}
where we used for the last inequality that the expression is maximized when $Q_{f_1(N)}$ is uniformly distributed over $S$. Hence, we get by assumption that
\begin{align}
\frac{1}{|D|}\cdot|S|\cdot\left(\frac{1}{|S|}-\frac{1}{|M|}\right)^{2}\leq\frac{\eps}{|M|}\ ,
\end{align}
or equivalently,
\begin{align}
\label{eq:lower-bound-d_0}
|D|\geq\frac{1}{\eps}\cdot\frac{\big(|M|-|S|\big)^{2}}{|M|\cdot|S|}\ .
\end{align}
Now, if $2^{k}\cdot|M|/|N|\leq1$, then $|S| = 1$ and~\eqref{eq:lower-bound-d_0} becomes 
\begin{align}
d &\geq \log \frac{1}{\eps} + 2\log(|M| - 1) - \log |M|\notag\\
&\geq m+\log\frac{1}{\eps}-2\ .
\end{align}
Otherwise, we get
\begin{align}
d 
\geq\log \frac{1}{\eps} + \log\left(\frac{N}{2^{k}}\cdot\frac{(1 - 2^k/|N| - 1/|M|)^2}{1 + |N|/(2^{k}\cdot|M|)}\right)\ .
\end{align}
But we have $\frac{|N|}{2^k\cdot|M|} \leq 1$ and thus,
\begin{align}
\frac{(1 - 2^k/|N| - 1/|M|)^2}{1 + |N|/(2^{k}\cdot|M|)} \geq \frac{1 - 2 \cdot 2^k/N}{2}\ .
\end{align}
When $k \leq n-2$, then we get
\begin{align}
d &\geq \log \frac{1}{\eps} + \log\left(\frac{N}{2^k}\cdot\frac{1}{4}\right)= n - k + \log \frac{1}{\eps} - 2\ ,
\end{align}
and otherwise the bound we aim to prove is simply implied by the general lower bound for the seed of extractors~\eqref{eq:tashma_bounds}.
\end{proof}





\section{Quantum Min-Entropy Extractors}\label{sec:quantum_min}

To understand our definition of quantum extractors, it is convenient to start with permutation based classical extractors, i.e., a family of permutations acting on the input. This family of permutations should satisfy the following property: for any probability distribution on input bit strings with high min-entropy, applying a typical permutation from the family to the input induces an almost uniform probability distribution on a prefix of the output. We define a quantum to quantum extractor in a similar way by allowing the operations performed to be general unitary transformations and the input to the extractor to be quantum.

\begin{definition}\label{def:stron_qmin}
Let $\cH_{M}\subset\cH_{N}$, $k\in[0,\log|N|]$, and $\eps>0$. A $(k,\eps)$-quantum extractor is a set of unitaries $\{U_{N}^{1},\ldots,U_{N}^{|D|}\}$ such that for all $\rho_{N}\in\cS(\cH_{N})$ with $H_{\min}(N)_{\rho}\geq k$,
\begin{align}\label{eq:qqminstrong}
\Big\|\frac{1}{|D|}\cdot\sum_{i=1}^{|D|}\trace_{N\backslash M}\big[U_{N}^{i}\rho_{N}(U_{N}^{i})^{\dagger}\big]\otimes\proj{i}_{D}-\frac{\id_{M}}{|M|}\otimes&\frac{\id_{D}}{|D|}\Big\|_{1}\notag\\
&\leq\eps\ .
\end{align}
The quantity $n=\log|N|$ is called the input size, $m=\log|M|$ the output size, and $d=\log|D|$ the seed size.\footnote{For a weak $(k,\eps)$-quantum extractor we just replace~\eqref{eq:qqminstrong} with $\Big\|\frac{1}{|D|}\cdot\sum_{i=1}^{|D|}\trace_{N\backslash M}\big[U_{N}^{i}\rho_{N}(U_{N}^{i})^{\dagger}\big]-\frac{\id_{M}}{|M|}\Big\|_{1}\leq\eps$.}
\end{definition}

We note that the seed $D$ is still classical in this definition. Alternatively, we could also define quantum extractors as general quantum channels from $\cS(\cH_{N})$ to $\cS(\cH_{M})$, and the number of Kraus operators would correspond to the dimension of the quantum seed $D$. For example, the fully depolarizing channel corresponds to a perfect extractor, independent of the min-entropy of the input. But since the minimal number of Kraus operators of the fully depolarizing channel is equal to the square of the output dimension $|M|$, it also has quantum seed size $d=2m$. However, here we restrict ourselves to quantum extractors with classical seed. It is instructive to consider extractors with domain and range consisting of qubit strings, i.e., $\cH_{N}=(\nC^{2})^{\otimes n}$ and $\cH_{M}=(\nC^{2})^{\otimes m}$, as well as with a binary seed, i.e., $D=\{0,1\}^{d}$. Examples for quantum extractors in the literature include the following:
\begin{itemize}
\item In~\cite{Dupuis10} so-called decoupling theorems were studied, and in particular it was shown that unitary 2-designs (see Definition~\ref{def:unitary2design}) are quantum extractors.
\item Ben-Aroya {\it et al.}~considered weak quantum extractors with the input size equal to the output size~\cite[Definition 5.1]{Ben-Aroya10}, and showed how to use quantum expanders for explicit constructions. See also the related work by Harrow~\cite{Harrow09_2} and references therein.
\item Hayden {\it et al.} studied quantum state randomization, which corresponds to weak $(0,\eps)$-quantum extractors with the input size equal to the output size~\cite{Hayden04}. See also the subsequent literature~\cite{Ambainis04,Dickinson06,Aubrun09}.
\end{itemize}
These constructions have many applications in quantum information theory, quantum cryptography, quantum complexity theory, and quantum physics (see, e.g., the papers above and references therein). Here, we discuss whether an extractor also works if the input is initially correlated with another quantum system. That is, we ask if an extractor is not only randomizing but decoupling as well. Note that the fully quantum conditional min-entropy can be negative for entangled states.

\begin{definition}
Let $\cH_{M}\subset\cH_{N}$, $k\in[-\log|N|,\log|N|]$, and $\eps>0$. A $(k,\eps)$-quantum extractor $\{U_{N}^{1},\ldots,U_{N}^{|D|}\}$ is decoupling if for all $\rho_{NR}\in\cS(\cH_{NR})$ with $H_{\min}(N|R)_{\rho}\geq k$,
\begin{align}
\Big\|\frac{1}{|D|}\cdot\sum_{i=1}^{|D|}\trace_{N\backslash M}\big[U_{N}^{i}\rho_{NR}&(U_{N}^{i})^{\dagger}\big]\otimes\proj{i}_{D}\notag\\
&-\frac{\id_{M}}{|M|}\otimes\rho_{R}\otimes\frac{\id_{D}}{|D|}\Big\|_{1}\leq\eps\ .
\end{align}
\end{definition}

Decoupling quantum extractors are extremely useful in quantum coding theory (see, e.g.,~\cite{Dupuis09} and references therein). In analogy to the classical case, one way of constructing quantum extractors is by means of quantum spectral extractors. In fact, all constructions (even the probabilistic ones) for quantum extractors that are known to be decoupling are based on spectral extractors.

\begin{definition}\label{def:h2_qq}
A $(k,\eps)$-quantum spectral extractor is a set of unitaries $\{U_{N}^{1},\ldots,U_{N}^{|D|}\}$ such that for the map $\psi(\rho_{N})=\frac{1}{|D|}\cdot\sum_{i=1}^{|D|}\trace_{N\ M}\big[U_{N}^{i}\rho_{N}(U_{N}^{i})^{\dagger}\big]\otimes\proj{i}_{D}$,
\begin{align}
\lambda_{1}\big(\psi^{\dagger}\circ\psi-\tau^{\dagger}\circ\tau\big)\leq2^{k}\cdot\frac{\eps}{|M|\cdot|D|}\ ,
\end{align}
where $\tau(\rho_{N})=\trace[\rho_{N}]\cdot\frac{\id_{M}}{|M|}\otimes\frac{\id_{D}}{|D|}$. For typical applications, it is sufficient to bound the second largest eigenvalue $\lambda_{2}(\psi^{\dagger}\circ\psi)$.
\end{definition}

In full analogy to the classical case, we get:

\begin{theorem}\label{thm:qrenyi2_stable}
Every $(k,\eps)$-quantum spectral extractor is also a decoupling $(k,2\sqrt{\eps})$-quantum extractor of the same output size and the same seed size.
\end{theorem}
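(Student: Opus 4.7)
The plan is to mirror the proof of Theorem~\ref{thm:crenyi2_stability} essentially line-by-line, with classical probability vectors $P_N\in\ell(N)$ replaced by trace-class operators $\rho_{NR}\in\cS(\cH_{NR})$. The maps $\psi$ and $\tau$ act trivially on $R$, so the $\rho_R^{1/2}$ sandwichings commute past $\psi\otimes\cI_R$ and $\tau\otimes\cI_R$, and the classical calculation carries through verbatim with the $(\id_{MD}\otimes\rho_R)$-weighted Hilbert-Schmidt inner product.

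Concretely, I would first write the trace distance as twice the maximum over $0\leq X\leq\id$ of $\trace\big[((\psi-\tau)\otimes\cI_R)(\rho_{NR})X\big]$, then absorb $\rho_R^{-1/2}$ into $\rho_{NR}$ to form $\rhoh_{NR}=(\id_N\otimes\rho_R^{-1/2})\rho_{NR}(\id_N\otimes\rho_R^{-1/2})$ and recognize the trace as the $(\id_{MD}\otimes\rho_R)$-weighted inner product of $((\psi-\tau)\otimes\cI_R)(\rhoh_{NR})$ with $X$. The $(2,2)$-Hölder inequality then splits this into two factors. The first, $\|X\|_{2,(\id\otimes\rho)}$, is bounded by $\sqrt{|M|\cdot|D|}$ via the $(1,\infty)$-Hölder exactly as in~\eqref{eq:h2_3}, since the $(1,\infty)$ estimate uses only $X\leq\id$ and $\trace[(\id_{MD}\otimes\rho_R)X]\leq|M|\cdot|D|$, neither of which cares whether the input is classical or quantum.

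For the second factor I would expand $\|((\psi-\tau)\otimes\cI_R)(\rhoh_{NR})\|_{2,(\id\otimes\rho)}^2$ into three weighted inner products and apply adjointness with respect to the standard Hilbert-Schmidt inner product to recombine them, exactly as in~\eqref{eq:h2_4}, arriving at
\begin{align}
\braket{\tilde{\rho}_{NR}}{((\psi^\dagger\circ\psi-\tau^\dagger\circ\tau)\otimes\cI_R)(\tilde{\rho}_{NR})}\ ,
\end{align}
where $\tilde{\rho}_{NR}=(\id_N\otimes\rho_R^{-1/4})\rho_{NR}(\id_N\otimes\rho_R^{-1/4})$. Using~\eqref{eq:h2vshmin}, for any $\rho_{NR}$ with $H_{\min}(N|R)_\rho\geq k$ we have $\|\tilde{\rho}_{NR}\|_2^2\leq 2^{-k}$, so the above quadratic form is bounded by $2^{-k}\cdot\lambda_1((\psi^\dagger\circ\psi-\tau^\dagger\circ\tau)\otimes\cI_R)$. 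Combining with the spectral extractor hypothesis of Definition~\ref{def:h2_qq} and the $\sqrt{|M|\cdot|D|}$ from the first factor, the prefactor $2$ turns into the claimed $2\sqrt{\eps}$.

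The one subtlety beyond a mechanical transcription of~(\ref{eq:h2_1})--(\ref{eq:h2_4}) is tensor-stability of the largest eigenvalue: I need $\lambda_1((\psi^\dagger\circ\psi-\tau^\dagger\circ\tau)\otimes\cI_R)=\lambda_1(\psi^\dagger\circ\psi-\tau^\dagger\circ\tau)$. This is a standard fact for self-adjoint superoperators on matrix algebras—an eigen-operator of the extended map can be chosen in product form $X_N\otimes Y_R$, and $\cI_R$ contributes eigenvalue $1$—so the extension does not enlarge the spectrum. This is the main (and only nontrivial) place where the quantum setting requires a small additional remark; every other step is the classical argument with trace replacing summation.
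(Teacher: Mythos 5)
Your proposal is correct and matches the paper's intent exactly: the paper gives no separate proof of Theorem~\ref{thm:qrenyi2_stable}, stating only that it follows ``in full analogy to the classical case,'' and your line-by-line transcription of (\ref{eq:h2_1})--(\ref{eq:h2_4}) with the weighted H\"older inequalities and the bound $H_{\min}(N|R)_{\rho}\leq H_{2}(N|R)_{\rho}$ is precisely that analogy made explicit. Your added remark that $\lambda_{1}\big((\psi^{\dagger}\circ\psi-\tau^{\dagger}\circ\tau)\otimes\cI_{R}\big)=\lambda_{1}\big(\psi^{\dagger}\circ\psi-\tau^{\dagger}\circ\tau\big)$ is the right (and only) point needing verification beyond the classical argument.
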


An instructive example are unitary two-designs.

\begin{definition}\label{def:unitary2design}
A set of unitaries $\{U_1, \dots, U_L\}$ acting on $\cH$ is said to be a unitary 2-design if we have for all $M\in\cB(\cH)$ that
\begin{align}
\frac{1}{L}\cdot\sum_{i=1}^L U_i^{\otimes 2} M (U_i^{\dagger})^{\otimes 2} = \int U^{\otimes 2} M (U^{\dagger})^{\otimes 2} dU\ ,
\end{align}
where the integration is with respect to the Haar measure on the unitary group.
\end{definition}

Many efficient constructions of unitary 2-designs are known~\cite{Dankert09,Gross07}, and in an $n$-qubit space, such unitaries can typically be computed by circuits of size $O(n^{2})$.

\begin{proposition}\label{prop:unitary2design}
A unitary 2-design is a $(k,\eps)$-quantum spectral extractor with output size $m=(n+k)/2-\log(1/\sqrt{\eps})$, where $n$ denotes the input size.
\end{proposition}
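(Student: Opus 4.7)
The plan is to bound the spectral quantity $\lambda_{1}(\psi^{\dagger}\circ\psi-\tau^{\dagger}\circ\tau)$ directly, in the spirit of the proof of Theorem~\ref{thm:crenyi2_stability} but with the fully quantum map $\psi$ of Definition~\ref{def:h2_qq} in place of the classical one. A short calculation using trace-preservation of $\psi$ and $\tau$ shows that the three maps $\tau^{\dagger}\circ\psi$, $\psi^{\dagger}\circ\tau$, and $\tau^{\dagger}\circ\tau$ all send $X\in\cP(\cH_{N})$ to $(\trace[X]/(|M|\cdot|D|))\cdot\id_{N}$, so that $\psi^{\dagger}\circ\psi-\tau^{\dagger}\circ\tau=(\psi-\tau)^{\dagger}\circ(\psi-\tau)$ and
\begin{align*}
\lambda_{1}(\psi^{\dagger}\circ\psi-\tau^{\dagger}\circ\tau)=\sup_{\|X\|_{2}=1}\bigl(\|\psi(X)\|_{2}^{2}-\|\tau(X)\|_{2}^{2}\bigr).
\end{align*}
Since $\|\tau(X)\|_{2}^{2}=(\trace[X])^{2}/(|M|\cdot|D|)\geq0$, the supremum is attained on traceless $X$, so it suffices to upper bound $\|\psi(X)\|_{2}^{2}$ for traceless $X$ with $\|X\|_{2}=1$.

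Next I would use orthogonality of the flags $\proj{i}_{D}$ to write
\begin{align*}
\|\psi(X)\|_{2}^{2}=\frac{1}{|D|^{2}}\sum_{i=1}^{|D|}\bigl\|\trace_{N\backslash M}\bigl(U_{N}^{i}X(U_{N}^{i})^{\dagger}\bigr)\bigr\|_{2}^{2},
\end{align*}
and apply the swap trick $\|B\|_{2}^{2}=\trace_{MM'}[(B\otimes B)F_{MM'}]$ (with $F_{MM'}$ the flip on two copies of $\cH_{M}$) together with the fact that the adjoint of $\trace_{N\backslash M}$ is tensoring with $\id_{N\backslash M}$. This rewrites each summand as $\trace_{NN'}[(F_{MM'}\otimes\id_{N\backslash M,N'\backslash M'})(U_{N}^{i})^{\otimes 2}(X\otimes X)((U_{N}^{i})^{\dagger})^{\otimes 2}]$. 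The 2-design hypothesis then lets me replace $\frac{1}{|D|}\sum_{i}(U_{N}^{i})^{\otimes 2}(\cdot)((U_{N}^{i})^{\dagger})^{\otimes 2}$ by the Haar twirl, which by Schur--Weyl lies in $\mathrm{span}(\id_{NN'},F_{NN'})$ with coefficients that are explicit linear combinations of $\trace[X]$ and $\|X\|_{2}^{2}$ determined by $|N|$.

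Finally, evaluating the two partial traces $\trace_{NN'}[(F_{MM'}\otimes\id_{N\backslash M,N'\backslash M'})\id_{NN'}]=|N|^{2}/|M|$ and $\trace_{NN'}[(F_{MM'}\otimes\id_{N\backslash M,N'\backslash M'})F_{NN'}]=|M|\cdot|N|$, and specializing to traceless $X$ with $\|X\|_{2}=1$, yields
\begin{align*}
\lambda_{1}(\psi^{\dagger}\circ\psi-\tau^{\dagger}\circ\tau)\leq\frac{|N|(|M|^{2}-1)}{|M|\cdot|D|\cdot(|N|^{2}-1)}\leq\frac{|M|}{|D|\cdot|N|}.
\end{align*}
Comparing with the threshold $2^{k}\cdot\eps/(|M|\cdot|D|)$ from Definition~\ref{def:h2_qq} reduces to $|M|^{2}\leq 2^{k+n}\cdot\eps$, equivalently $m\leq(n+k)/2-\log(1/\sqrt{\eps})$, which is the claim. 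The main obstacle is the Schur--Weyl bookkeeping: identifying the correct Haar coefficients $\alpha\,\id_{NN'}+\beta\,F_{NN'}$ as linear combinations of $\trace[X]$ and $\|X\|_{2}^{2}$, and correctly evaluating the two weighted partial traces above. Everything else is a mechanical adaptation of the classical spectral-extractor calculation.
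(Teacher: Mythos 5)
Your proposal is correct and follows essentially the same route as the paper's proof: the swap trick, replacement of the average over the unitaries by the Haar twirl via the 2-design property, Schur--Weyl evaluation of the twirl, and the same two partial traces, giving the same final bound $|M|/(|D|\cdot|N|)$ (the paper twirls $F_{MM'}\otimes\id_{NN'\backslash MM'}$ in the Heisenberg picture rather than $X\otimes X$, which is equivalent). One small repair: the reduction to traceless $X$ does not follow merely from $\|\tau(X)\|_{2}^{2}\geq0$; justify it instead by observing that $(\psi-\tau)(\id_{N})=0$, so the top eigenvector of the positive semidefinite, self-adjoint superoperator $(\psi-\tau)^{\dagger}\circ(\psi-\tau)$ either has eigenvalue zero or is Hilbert--Schmidt orthogonal to $\id_{N}$, hence traceless.
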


Note that $k$ can be negative for entangled input states, and that the corresponding classical result for families of two-universal hash functions reads $m=k-\log\frac{1}{\eps}$ (Proposition~\ref{prop:hashing}).

\begin{proof}
For any $X_{N},Y_{N}\in\cP(\cH_{N})$ we get
\begin{align}
&\braket{X_{N}}{(\tau^{\dagger}\circ\tau)(Y_{N})}=\braket{\tau(X_{N})}{\tau(Y_{N})}\notag\\
&=\trace\left[\trace\big[X_{N}^{\dagger}\big]\cdot\frac{\id_{MD}}{|M|\cdot|D|}\cdot\trace\big[Y_{N}\big]\cdot\frac{\id_{MD}}{|M|\cdot|D|}\right]\notag\\
&=\frac{1}{|M|\cdot|D|}\cdot\trace\big[X_{N}^{\dagger}\big]\cdot\trace\big[Y_{N}\big]\ .
\end{align}
Furthermore, we calculate
\begin{align}
&\braket{X_{N}}{(\psi^{\dagger}\circ\psi)(Y_{N})}=\frac{1}{|D|^{2}}\cdot\sum_{i=1}^{|D|}\braket{X_{N}}{(\psi_{i}^{\dagger}\circ\psi_{i})(Y_{N})}\notag\\
&=\frac{1}{|D|^{2}}\cdot\sum_{i=1}^{|D|}\trace\Big[X^{\dagger}_{N}(U_{N}^{i})^{\dagger}\notag\\
&\qquad\qquad\qquad\big(\id_{N\backslash M}\otimes(\trace_{N\backslash M}[U_{N}^{i}Y_{N}(U_{N}^{i})^{\dagger}])\big)U_{N}^{i}\Big]\notag\\
&=\frac{1}{|D|^{2}}\cdot\sum_{i=1}^{|D|}\trace\Big[\big(\trace_{N\backslash M}[U_{N}^{i}X_{N}^{\dagger}(U_{N}^{i})^{\dagger}]\notag\\
&\qquad\qquad\qquad\quad\otimes\trace_{N'\backslash M'}[U_{N'}^{i}Y_{N'}(U_{N'}^{i})^{\dagger}]\big)F_{MM'}\Big]\notag\\
&=\frac{1}{|D|}\cdot\trace\Big[\big(X_{N}^{\dagger}\otimes Y_{N'}\big)\frac{1}{|D|}\sum_{i=1}^{|D|}\big((U_{N}^{i})^{\dagger}\otimes (U_{N'}^{i})^{\dagger}\big)\notag\\
&\qquad\qquad\quad(F_{MM'}\otimes\id_{NN'\backslash MM'})\big(U_{N}^{i}\otimes U_{N'}^{i}\big)\Big]\ ,
\end{align}
where we have used that the partial trace commutes with the identity, and denote the swap operator by $F_{MM'}$. Since $\{U_{N}^{1},\dots,U_{N}^{|D|}\}$ is a unitary 2-design we have that~\cite[Lemma 3.4]{Dupuis10}
\begin{align}
&\frac{1}{|D|}\sum_{i=1}^{|D|}\big((U_{N}^{i})^{\dagger}\otimes (U_{N'}^{i})^{\dagger}\big)(F_{MM'}\otimes\id_{NN'\backslash MM'})\big(U_{N}^{i}\otimes U_{N'}^{i}\big)\notag\\
&=\int\big(U_{N}^{\dagger}\otimes U_{N'}^{\dagger}\big)(F_{MM'}\otimes\id_{NN'\backslash MM'})\big(U_{N}\otimes U_{N'}\big)dU\notag\\
&=\frac{1}{|M|}\cdot\frac{|N|^{3}-|M|^{2}\cdot|N|}{|N|^{3}-|N|}\cdot\id_{NN'}\notag\\
&\quad\,+\frac{1}{|M|}\cdot\frac{|N|^{2}\cdot|M|^{2}-|N|^{2}}{|N|^{3}-|N|}\cdot F_{NN'}\ .
\end{align}
Hence, we arrive at
\begin{align}
\braket{X_{N}}{(\psi^{\dagger}\circ\psi-\tau^{\dagger}\circ\tau)(X_{N})}\leq\frac{|M|}{|N|\cdot|D|}\cdot\langle X_{N}|X_{N}\rangle\ ,
\end{align}
and the claim follows.
\end{proof}


As in the classical case, quantum spectral extractors always have a long seed.

\begin{proposition}\label{prop:qq_longseed}
Every $(k,\eps)$-quantum spectral extractor with input size $n$, output size $m$, and seed size $d$ necessarily has $d\geq\min\{n-k,m\}+\log(1/\eps)-O(1)$.
\end{proposition}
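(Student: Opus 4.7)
The plan is to transfer the classical argument of Proposition~\ref{prop:h2_longseed} to the quantum setting by exhibiting a high min-entropy input state whose image under $U_{N}^{1}$ has a strongly concentrated marginal on $\cH_{M}$. Since $\tau(\rho_{N})$ is proportional to the identity on $\cH_{MD}$, one checks directly that $\langle\psi(\rho_{N}),\tau(\rho_{N})\rangle=\|\tau(\rho_{N})\|_{2}^{2}$, so that
\begin{align}
\braket{\rho_{N}}{(\psi^{\dagger}\circ\psi-\tau^{\dagger}\circ\tau)(\rho_{N})}=\|(\psi-\tau)(\rho_{N})\|_{2}^{2}\ ,
\end{align}
and the Rayleigh-quotient characterisation of $\lambda_{1}$ yields the variational bound $\lambda_{1}(\psi^{\dagger}\circ\psi-\tau^{\dagger}\circ\tau)\geq\|(\psi-\tau)(\rho_{N})\|_{2}^{2}/\|\rho_{N}\|_{2}^{2}$ for any nonzero $\rho_{N}\in\cS(\cH_{N})$. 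Everything thus reduces to a clever choice of $\rho_{N}$.

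The choice is the direct quantum analogue of the classical flat source. I would fix a subspace $\cH_{S}\subset\cH_{M}$ of dimension $|S|=\lceil 2^{k}|M|/|N|\rceil$ and a subspace $\cH_{T}\subset\cH_{N\backslash M}$ with $|S|\cdot|T|=2^{k}$ (integer powers of two in the qubit setting; in the general case any additional integrality loss can be absorbed into $O(1)$), and set
\begin{align}
\rho_{N}=(U_{N}^{1})^{\dagger}\left(\frac{\Pi_{S}\otimes\Pi_{T}}{|S|\cdot|T|}\right)U_{N}^{1}\ ,
\end{align}
where $\Pi_{S},\Pi_{T}$ are the corresponding orthogonal projectors. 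Unitary invariance makes $\rho_{N}$ a flat state with $H_{\min}(N)_{\rho}=k$ and $\|\rho_{N}\|_{2}^{2}=2^{-k}$, while by construction $\trace_{N\backslash M}[U_{N}^{1}\rho_{N}(U_{N}^{1})^{\dagger}]=\Pi_{S}/|S|$ lives in an $|S|$-dimensional subspace of $\cH_{M}$. Expanding $(\psi-\tau)(\rho_{N})$ block-diagonally in the seed register gives
\begin{align}
\|(\psi-\tau)(\rho_{N})\|_{2}^{2}=\frac{1}{|D|^{2}}\sum_{i=1}^{|D|}\Big\|\trace_{N\backslash M}[U_{N}^{i}\rho_{N}(U_{N}^{i})^{\dagger}]-\frac{\id_{M}}{|M|}\Big\|_{2}^{2}\ ,
\end{align}
and dropping all but the $i=1$ term together with the explicit value $\|\Pi_{S}/|S|-\id_{M}/|M|\|_{2}^{2}=1/|S|-1/|M|$ yields $\lambda_{1}(\psi^{\dagger}\circ\psi-\tau^{\dagger}\circ\tau)\geq 2^{k}(1/|S|-1/|M|)/|D|^{2}$.

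Combining with the quantum spectral extractor hypothesis rearranges to $|D|\geq (|M|-|S|)/(\eps|S|)$, and the remainder is exactly the same case split as in the proof of Proposition~\ref{prop:h2_longseed}: when $2^{k}|M|\leq|N|$ we have $|S|=1$ and obtain $d\geq m+\log(1/\eps)-O(1)$; otherwise $|S|\approx 2^{k}|M|/|N|$ and we obtain $d\geq n-k+\log(1/\eps)-O(1)$; the residual regime $k\geq n-O(1)$ is absorbed into the simple general extractor lower bound. The only genuinely new step relative to the classical argument is the construction of $\rho_{N}$, where I exploit the unitary freedom of the extractor to pull back a tensor-product flat source through $U_{N}^{1}$ so that its $\cH_{M}$-marginal is concentrated by design. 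After that the computation parallels the classical one line by line, and I expect no serious obstacle beyond the minor integrality bookkeeping for $|S|$ and $|T|$ away from the qubit case.
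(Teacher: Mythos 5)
Your proof is correct and follows essentially the same route as the paper's: pull back a flat source through $(U_N^1)^\dagger$ so that its $\cH_M$-marginal is $\Pi_S/|S|$ with $|S|=\lceil 2^k|M|/|N|\rceil$, lower-bound $\lambda_1(\psi^\dagger\circ\psi-\tau^\dagger\circ\tau)$ via the Rayleigh quotient and the $i=1$ seed block, and reuse the classical case analysis. The only (cosmetic) differences are that the paper takes the flat state supported on $\cH_S\otimes\cH_{N\backslash M}$ (so that $H_2(N)_\gamma\geq k$ holds without any integrality bookkeeping for $|T|$) and keeps only the $|S|(1/|S|-1/|M|)^2$ part of the 2-norm you compute exactly; both choices give the same final bound.
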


\begin{proof}
Let $\cH_{S}\subset\cH_{M}$ with $|S|=\lceil2^{k}\cdot|M|/|N|\rceil$, let $\{\ket{t}\}_{t=1}^{|N|/|M|}$ be an orthonormal basis of $\cH_{N\backslash M}$, and consider the state
\begin{align}
\gamma_{N}=\frac{|M|}{|S|\cdot|N|}\cdot\sum_{s\in S}\sum_{t=1}^{|N|/|M|}(U_{N}^{1})^{\dagger}\proj{st}_{N}U_{N}^{1}\ .
\end{align}
Since $H_{2}(N)_{\gamma}=-\log\|\gamma_{N}\|_{2}^{2}\geq k$ we have by the same arguments as in the classical case (Proposition~\ref{prop:h2_longseed}) that
\begin{align}
&\lambda_{1}(\psi^{\dagger}\circ\psi-\tau^{\dagger}\circ\tau)\notag\\
&\geq\frac{2^{k}}{|D|^{2}}\cdot\sum_{i=1}^{|D|}\Big\|\trace_{N\backslash M}\big[U_{N}^{i}\gamma_{N}(U_{N}^{i})^{\dagger}\big]-\frac{\id_{M}}{|M|}\Big\|_{2}^{2}\notag\\
&\geq\frac{2^{k}}{|D|^{2}}\cdot\Big\|\trace_{N\backslash M}\big[U_{N}^{1}\gamma_{N}(U_{N}^{1})^{\dagger}\big]-\frac{\id_{M}}{|M|}\Big\|_{2}^{2}\notag\\
&\geq\frac{2^{k}}{|D|^{2}}\cdot|S|\cdot\Big(\frac{1}{|S|}-\frac{1}{|M|}\Big)^{2}\ .
\end{align}
The rest of the proof proceeds as in the classical case (Proposition~\ref{prop:h2_longseed}), except that we use in the very end a general lower bound for the seed of quantum extractors (Proposition~\ref{prop:coneps}) instead of the corresponding bound for classical extractors~\eqref{eq:tashma_bounds}.
\end{proof}

We show in the next section that there exists a quantum extractor with seed size $O(\log(1/\eps))$ matching the simple $d \geq \log(1/\eps)$ lower bound. In contrast, any classical extractor has to satisfy $d\geq\log(n-k)+2\log(1/\eps)-O(1)$.

\begin{proposition}\label{prop:coneps}
Every $(k,\eps)$-quantum min-entropy extractor with $k\leq n-1$ ($n$ is the output size) necessarily has seed size $d\geq\log(1/\eps)$.
\end{proposition}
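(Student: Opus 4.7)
The plan is to use a seed-dependent worst-case source construction: for each seed value $j\in\{1,\ldots,|D|\}$ I exhibit a source $\rho_N^{(j)}$ of min-entropy at least $k$ on which the $j$-th extractor branch outputs a state that is (essentially) maximally far from $\id_M/|M|$. The average over seeds in the definition of a $(k,\eps)$-quantum extractor then directly forces $|D|\geq 1/\eps$. No probabilistic argument or information-theoretic machinery is needed; the bound is combinatorial.

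Without loss of generality I would take $k=n-1$ (any $(k,\eps)$-extractor with $k\leq n-1$ is in particular an $(n-1,\eps)$-extractor, since the class of sources with $H_{\min}\geq n-1$ is a subclass of $H_{\min}\geq k$) and $m\geq 1$ (the case $m=0$ is vacuous). I would then fix any rank-$\lfloor|M|/2\rfloor$ projector $P$ on $\cH_M$ and, for each $j$, define
\begin{align*}
\rho_N^{(j)}:=(U_N^j)^\dagger\bigg(\frac{P}{\lfloor|M|/2\rfloor}\otimes\frac{|M|\cdot\id_{N\setminus M}}{|N|}\bigg)U_N^j.
\end{align*}
Its largest eigenvalue is $|M|/(\lfloor|M|/2\rfloor\cdot|N|)\leq 2/|N|$, so $H_{\min}(N)_{\rho^{(j)}}\geq n-1=k$. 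A one-line partial-trace calculation then gives $\sigma_j^{(j)}:=\trace_{N\setminus M}[U_N^j\rho_N^{(j)}(U_N^j)^\dagger]=P/\lfloor|M|/2\rfloor$, and for $|M|$ even the trace distance $\|\sigma_j^{(j)}-\id_M/|M|\|_1$ equals exactly $1$.

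The conclusion is immediate. Since the $\proj{i}_D$ have pairwise orthogonal supports, the extractor's output-vs-uniform 1-norm decomposes as $\frac{1}{|D|}\sum_{i=1}^{|D|}\|\sigma_i^{(j)}-\id_M/|M|\|_1$. Applying the $(k,\eps)$ extractor condition to $\rho_N^{(j)}$ and retaining only the $i=j$ term yields
\begin{align*}
\eps\geq\frac{1}{|D|}\,\|\sigma_j^{(j)}-\id_M/|M|\|_1=\frac{1}{|D|},
\end{align*}
hence $d=\log|D|\geq\log(1/\eps)$.

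The only mild subtlety — and the only thing one has to be slightly careful about — is the parity of $|M|$. When $|M|$ is odd, the maximum achievable 1-distance subject to $\rho\leq 2/|N|\id_N$ is $(|M|-1)/|M|$ rather than $1$, giving $d\geq\log(1/\eps)-O(1/|M|)$; this is a genuine dimensional constraint, not an artifact of the argument, and is of the same $O(1)$ flavor as the additive constants in Propositions~\ref{prop:h2_longseed} and~\ref{prop:qq_longseed}. I do not foresee any real obstacle to this plan.
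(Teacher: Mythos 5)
Your proposal is correct and follows essentially the same route as the paper: both pull back a state that is maximally mixed on a half-dimensional subspace aligned with one fixed seed's unitary (the paper uses $U_N^1$, you use a generic $U_N^j$), observe it has min-entropy $n-1$ yet produces an output at trace distance $1$ from $\id_M/|M|$ on that branch, and then use orthogonality of the seed register to keep only that one term, giving $\eps\geq 1/|D|$. Your extra care about the parity of $|M|$ is a fine refinement the paper sidesteps by working with qubit strings, where $|M|=2^m$ is even.
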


\begin{proof}
Let $\cH_{S}\subset\cH_{M}$ with $|S|=|M|/2$, let $\{\ket{t}\}_{t=1}^{|N|/|M|}$ be an orthonormal basis of $\cH_{N\backslash M}$, and consider the state
\begin{align}
\gamma_{N}=\frac{2}{|M|}\cdot\sum_{s\in S}\sum_{t=1}^{|N|/|M|}(U_{N}^{1})^{\dagger}\proj{st}_{N}U_{N}^{1}\ .
\end{align}
Since $H_{\min}(N)_{\sigma}=n-1$, and
\begin{align}
&\Big\|\trace_{N\backslash M}\big[U_{N}^{1}\gamma_{N}(U_{N}^{1})^{\dagger}]-\frac{\id_{M}}{|M|}\Big\|_{1}\notag\\
&=\Big\|\frac{2}{|M|}\cdot\sum_{s\in S}\proj{s}_{M}-\frac{\id_{M}}{|M|}\Big\|_{1}\notag\\
&=1\ ,
\end{align}
the claim follows.
\end{proof}


\section{Short Seeded Quantum Extractors}\label{sec:shortseeded}

Here we show that very small sets of random unitaries yield good quantum extractors.

\begin{theorem}\label{thm:shortseed}
There exists a $(k,\eps)$-quantum extractor with $m=(n+k)/2-\log(1/\eps)-O(1)$, and $d=2\log(1/\eps)+O(\log\log(1/\eps))$.
\end{theorem}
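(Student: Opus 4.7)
The plan is to use a probabilistic argument: sample $|D|$ unitaries $U_N^1,\ldots,U_N^{|D|}$ independently according to the Haar measure on $U(\cH_N)$, and show that with positive probability the resulting set is a $(k,\eps)$-quantum extractor with $m=(n+k)/2-\log(1/\eps)-O(1)$. Because of the diagonal structure of the seed register, the trace norm in~\eqref{eq:qqminstrong} equals $\frac{1}{|D|}\sum_{i=1}^{|D|}\bigl\|\trace_{N\backslash M}[U_N^i\rho_N(U_N^i)^\dagger]-\id_M/|M|\bigr\|_1$, so the target is to bound this empirical average uniformly over all min-entropy-$k$ inputs.

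For a single fixed $\rho_N$ with $H_{\min}(N)_\rho\geq k$, the Haar measure is in particular a unitary 2-design, so the same moment computation as in the proof of Proposition~\ref{prop:unitary2design} yields $\mathbb{E}_U\|\trace_{N\backslash M}[U\rho_N U^\dagger]-\id_M/|M|\|_2^2\leq |M|/(|N|\cdot 2^k)$. Via $\|\cdot\|_1\leq\sqrt{|M|}\|\cdot\|_2$ and Jensen's inequality this translates to an expected trace distance of $O(\eps)$ with our choice of $m$. L\'{e}vy's concentration of measure on $U(\cH_N)$, applied to the $O(1)$-Lipschitz map $U\mapsto\|\trace_{N\backslash M}[U\rho_N U^\dagger]-\id_M/|M|\|_1$, sharpens this to a Gaussian tail of width $O(|N|^{-1/2})$ around the mean; combining with Hoeffding's inequality across the $|D|$ i.i.d.\ samples then shows that, for each fixed $\rho_N$, the empirical average exceeds $\eps$ with probability at most $\exp(-\Omega(|D|\eps^2))$.

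The main obstacle is extending this pointwise bound to a uniform bound over the entire set of $\rho_N$ with $H_{\min}(N)_\rho\geq k$. Convexity in $\rho_N$ reduces the problem to its extreme points -- flat $k$-sources $\rho_N=2^{-k}P$ with $P$ a rank-$2^k$ projection -- parameterized by the Grassmannian $\mathrm{Gr}(2^k,|N|)$, but a na\"{\i}ve $\eps$-net on this Grassmannian already has log-size of order $2^k|N|\log(1/\eps)$, which would force $|D|$ to grow with the input dimension. To obtain the $n$- and $k$-independent seed size of the theorem I would replace the single-scale union bound by a multi-scale chaining (Dudley / generic chaining) of the empirical process indexed by the min-entropy-$k$ states; the Lipschitz dependence on $\rho_N$ in trace distance, together with the sub-Gaussian increments produced by L\'{e}vy's lemma at every scale, should yield a chaining bound whose only overhead is the polylogarithmic $O(\log\log(1/\eps))$ correction appearing in $d$. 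Optimality up to this $\log\log$ factor and the factor of $2$ in the leading term then follows from the simple lower bound $d\geq\log(1/\eps)$ of Proposition~\ref{prop:coneps}.
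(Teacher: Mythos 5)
There is a genuine gap at the step you yourself flag as ``the main obstacle.'' Your pointwise concentration is too weak: after invoking L\'{e}vy's lemma you fall back on Hoeffding for bounded variables and record only $\exp(-\Omega(|D|\eps^{2}))$ for the empirical average, which indeed cannot beat a net of log-size $\Theta(2^{k}|N|)$, and the generic-chaining argument you propose to rescue this is left entirely unexecuted --- it is not at all clear that the increments of the process $\rho_N\mapsto\frac{1}{|D|}\sum_i\|\trace_{N\backslash M}[U_N^i\rho_N(U_N^i)^\dagger]-\id_M/|M|\|_1$ admit a Dudley integral that converges to something dimension-free. The paper's route needs no chaining. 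The point you discard is that the map $U\mapsto\|\trace_{N\backslash M}[U\rho_N U^\dagger]-\id_M/|M|\|_1$ for a flat $l$-source is not merely $O(1)$-Lipschitz but $O(2^{-l/2})$-Lipschitz in the Frobenius metric (since $\|(U-V)\rho W\|_1\leq\|U-V\|_2\,\|\sqrt{\rho}\|_\infty\,\|\sqrt{\rho}W\|_2$), so L\'{e}vy's lemma gives a per-sample tail $\exp(-\Omega(|N|2^{l}\gamma^{2}))$, and the concentration-of-the-average bound of \cite[Lemma A.2]{Fawzi11} preserves this, yielding
\begin{align}
\mathbf{Pr}\Big\{\frac{1}{t}\sum_{i=1}^{t}\Big\|\trace_{N\backslash M}\big[U_{N}^{i}\rho_{N}(U_{N}^{i})^{\dagger}\big]-\frac{\id_{M}}{|M|}\Big\|_{1}-\frac{|M|}{\sqrt{|N|2^{l}}}\geq\gamma\Big\}\leq\exp\Big(-\frac{t|N|2^{l}\gamma^{2}}{16}\Big)\ .
\end{align}
This exponent scales with $|N|2^{l}$ and therefore \emph{exactly cancels} the $(5/\delta)^{2|N|2^{l}}$ cardinality of the net of flat $l$-sources, so the plain single-scale union bound closes with $t=O(\eps^{-2}\log(1/\delta))$, independent of $n$ and $k$; choosing $\delta=\eps/3$ gives $d=2\log(1/\eps)+O(\log\log(1/\eps))$.

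The remaining ingredients of your sketch do match the paper: the 2-design moment bound for the Haar mean, the reduction by convexity to flat $k$-sources (the paper additionally takes a union over nets $N_{l,\delta}$ for all $l\geq k$ before decomposing a general min-entropy-$k$ state as a mixture of flat $k$-sources and paying an extra $\delta$ in trace distance). But as written, the proposal does not establish the theorem: the decisive quantitative fact --- that the concentration exponent grows with the dimension at the same rate as the log of the net size --- is absent, and the chaining argument offered in its place is a conjecture, not a proof.
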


\begin{proof}
For $\rho_{N}\in\cS(\cH_{N})$ with $H_{\min}(N)_{\rho}\geq l$ we have by the extraction property of unitary 2-designs (Proposition~\ref{prop:unitary2design}) that
\begin{align}
\int\|\trace_{N\backslash M}\big[U_{N}\rho_{N}U_{N}^{\dagger}\big]-\frac{\id_{M}}{|M|}\|_{1}\;dU\leq \frac{M}{\sqrt{N\cdot2^{l}}}\ ,
\end{align}
where the integration is with respect to the Haar measure on the unitary group. This means that for each specific input there exists a unitary $U_{N}$ that extracts well. We use a measure concentration argument (based on L\'{e}vy's lemma) that gives~\cite[Theorem 3.9]{Dupuis09},
\begin{align}
\mathbf{Pr}\Big\{\Big\|\trace_{N\backslash M}\big[U_{N}\rho_{N}U_{N}^{\dagger}\big]-&\frac{\id_{M}}{|M|}\Big\|_{1}\geq \frac{M}{\sqrt{N\cdot2^{l}}}+\gamma\Big\}\notag\\
&\leq\exp(-\frac{N\gamma^{2}\cdot2^{l}}{16})\ ,
\end{align}
for $\gamma>0$. Moreover, we use a concentration of the average bound~\cite[Lemma A.2]{Fawzi11} to get
\begin{align}\label{eq:omar_concentration}
\mathbf{Pr}\Big\{\frac{1}{t}\cdot\sum_{i=1}^{t}\Big\|\trace_{N\backslash M}\big[U_{N}^{i}\rho_{N}(U_{N}^{i})^{\dagger}\big]-&\frac{\id_{M}}{|M|}\Big\|_{1}-\frac{M}{\sqrt{N\cdot2^{l}}}\geq\gamma\Big\}\notag\\
&\leq\exp(-\frac{tN\gamma^{2}\cdot2^{l}}{16})\ .
\end{align}
In order to obtain a set of unitaries that extracts well for all states, we use a net $N_{l,\delta}$ of states with $|N_{l,\delta}|\leq(5/\delta)^{2N\cdot2^{l}}$ such that for every flat $l$-source $\rho_{N}$ (i.e., $\rho_{N}$ has $2^{l}$ non-zero eigenvalues equal to $2^{-l}$) there exists $\bar{\rho}_{N}\in N_{l,\delta}$ with $\|\rho_{N}-\bar{\rho}_{N}\|_{1}\leq\delta$ for (sufficiently small) $\delta>0$~\cite[Lemma II.4]{Hayden04}. For a union bound over all $\rho_{N}\in N_{l,\delta}$,~\eqref{eq:omar_concentration} then gives
\begin{align}\label{eq:test}
\mathbf{Pr}\Big\{&\exists\rho_{N}\in N_{l,\delta}:\;\frac{1}{t}\cdot\sum_{i=1}^{t}\Big\|\trace_{N\backslash M}\big[U_{N}^{i}\rho_{N}(U_{N}^{i})^{\dagger}\big]-\frac{\id_{M}}{|M|}\Big\|_{1}\\
&-\frac{M}{\sqrt{N\cdot2^{l}}}\geq\gamma\Big\}\notag\leq\left(\frac{5}{\delta}\right)^{2N\cdot2^{l}}\cdot\exp(-\frac{tN\gamma^{2}\cdot2^{l}}{16})\ .
\end{align}
Now, we fix $\frac{M}{\sqrt{N\cdot2^{k}}}=\eps/3$ giving us $m=(n+k)/2-\log(1/\eps)-\log(3)$. Furthermore, we choose $\gamma=\eps/3$ and $t=(C/\eps^{2})\cdot\log(1/\delta)$ for some (sufficiently large) $C>0$. From~\eqref{eq:test} we then get that for all $\rho_{N}\in N_{k,\delta}$,
\begin{align}\label{eq:netstep}
\frac{1}{t}\cdot\sum_{i=1}^{t}\Big\|\trace_{N\backslash M}\big[U_{N}^{i}\rho_{N}(U_{N}^{i})^{\dagger}\big]-\frac{\id_{M}}{|M|}\Big\|_{1}\leq2\eps/3\ ,
\end{align}
with very high probability. By taking a union bound over all $l\geq k$, we get that~\eqref{eq:netstep} still holds with very high probability for all $\rho_{N}\in\bigcup_{l\geq k}N_{l,\delta}$. Hence we have shown the existence of a set of unitaries with $d=\log(t)=2\log(1/\eps)+\log\log(1/\delta)+\log(C)$ that extracts well for all $\rho_{N}\in\bigcup_{l\geq k}N_{l,\delta}$. In order to make it work for all $\rho_{N}\in\cS(\cH_{N})$ with $H_{\min}(N)_{\rho}\geq k$, we write $\rho_{N}$ as a mixture $\rho_{N}=\sum_{j}p_{j}\rho_{N}^{j}$ of flat $k$-sources $\rho_{N}^{j}$~\cite[Lemma 6.10]{Vadhan11}. For each $\rho_{N}^{j}$, we know there exists $\bar{\rho}_{N}^{j}\in\cup_{l\geq k}N_{l,\delta}$ such that $\|\rho_{N}^{j}-\bar{\rho}_{N}^{j}\|_{1}\leq\delta$. This means that for all $\rho_{N}^{j}$ we have
\begin{align}
\frac{1}{t}\cdot\sum_{i=1}^{t}\Big\|\trace_{N\backslash M}\big[U_{N}^{i}\rho_{N}^{j}(U_{N}^{i})^{\dagger}\big]-\frac{\id_{M}}{|M|}\Big\|_{1}\leq2\eps/3+\delta\ .
\end{align}
The claim follows for $\delta=\eps/3$.
\end{proof}

Thus the optimal seed size does only depend on the error $\eps$ and not on the input size $n$ and min-entropy $k$ as in the classical case. However, we do not know if the extractor from Theorem~\ref{thm:shortseed} is also decoupling. Or more generally, if any decoupling quantum extractors with seed size $d<\min\{n-k,m\}+\log(1/\eps)-O(1)$ exist (cf.~Proposition~\ref{prop:qq_longseed}).


\section{Discussion}\label{sec:discussion}

We note that our stability result for classical and quantum spectral extractors (Theorems~\ref{thm:crenyi2_stability} and~\ref{thm:qrenyi2_stable}) also works if the quantum side information is described by infinite-dimensional Hilbert spaces.


There are many open questions whose answers would have applications in quantum information theory. Concerning classical extractors we would like to gain a general understanding of when a construction is quantum-proof. Following Ta-Shma~\cite{TaShma13}, we mention that the example from~\cite{Gavinsky07} is compatible with the conjecture that every extractor is approximately quantum-proof with $\eps\mapsto\eps'=O(\eps\cdot m)$. For quantum extractors we would like to find probabilistic and explicit constructions that are decoupling but not based on spectral extractors. For quantum spectral extractors we would like to find probabilistic and explicit constructions that match our lower bound for the seed size in Proposition~\ref{prop:qq_longseed}. Finally, quantum spectral extractors $\psi$ are specified by the second largest eigenvalue $\lambda_{2}(\psi^{\dagger}\circ\psi)$, and this relates them to the study of balanced quantum expanders (as, e.g., defined in~\cite{Ben-Aroya10}).





\section*{Acknowledgment}

We acknowledge discussions with Stephanie Wehner. MB and VBS acknowledge financial support by the German Science Foundation (grant CH 843/2-1), the Swiss National Science Foundation (grants PP00P2-128455, 20CH21-138799 (CHIST-ERA project CQC)), the Swiss National Center of Competence in Research 'Quantum Science and Technology (QSIT)', the Swiss State Secretariat for Education and Research supporting COST action MP1006 and the European Research Council under the European Union's Seventh Framework Programme (FP/2007-2013) / ERC Grant Agreement no.~337603. VBS is in addition supported by an ETH Postdoctoral Fellowship. The research of OF is supported by the European Research Council grant No.~258932. OS acknowledges financial support by the Elite Network of Bavaria project QCCC


\bibliographystyle{abbrv}
\bibliography{library}

\begin{thebibliography}{10}

\bibitem{Ambainis04}
A.~Ambainis and A.~Smith.
\newblock Small pseudo-random families of matrices: Derandomizing approximate
  quantum encryption.
\newblock In {\em Approximation, Randomization, and Combinatorial Optimization.
  Algorithms and Techniques}, volume 3122 of {\em LNCS}, pages 249--260. 2004.

\bibitem{Aubrun09}
G.~Aubrun.
\newblock On almost randomizing channels with a short {K}raus decomposition.
\newblock {\em Communications in Mathematical Physics}, 288:1103--1116, 2009.

\bibitem{Ben-Aroya10}
A.~Ben-Aroya, O.~Schwartz, and A.~Ta-Shma.
\newblock Quantum expanders: Motivation and construction.
\newblock {\em Theory of Computing}, 6(3):47--79, 2010.

\bibitem{Berta13_3}
M.~Berta, P.~Coles, and S.~Wehner.
\newblock An equality between entanglement and uncertainty.
\newblock 2013.

\bibitem{Berta13}
M.~Berta, J.~M. Renes, and M.~M. Wilde.
\newblock Identifying the information gain of a quantum measurement.
\newblock 2013.

\bibitem{Dankert09}
C.~Dankert, R.~Cleve, J.~Emerson, and E.~Livine.
\newblock Exact and approximate unitary 2-designs and their application to
  fidelity estimation.
\newblock {\em Phys. Rev. A}, 80:012304, Jul 2009.

\bibitem{De12}
A.~De, C.~Portmann, T.~Vidick, and R.~Renner.
\newblock Trevisan's extractor in the presence of quantum side information.
\newblock {\em SIAM Journal on Computing}, 41(4):915--940, 2012.

\bibitem{Dickinson06}
P.~A. Dickinson and A.~Nayak.
\newblock Approximate randomization of quantum states with fewer bits of key.
\newblock In {\em AIP Conference Proceedings}, volume 864, pages 18--36, 2006.

\bibitem{Dupuis09}
F.~Dupuis.
\newblock {\em The Decoupling Approach to Quantum Information Theory}.
\newblock PhD thesis, Universit\'e de Montr\'eal, 2009.

\bibitem{Dupuis10}
F.~Dupuis, M.~Berta, J.~Wullschleger, and R.~Renner.
\newblock One-shot decoupling.
\newblock 2010.

\bibitem{Dupuis12}
F.~Dupuis, O.~Szehr, and M.~Tomamichel.
\newblock A decoupling approach to classical data transmission over quantum
  channels.
\newblock 2012.

\bibitem{Fawzi11}
O.~Fawzi, P.~Hayden, and P.~Sen.
\newblock From low-distortion norm embeddings to explicit uncertainty relations
  and efficient information locking.
\newblock In {\em Proceedings of the 43rd annual ACM symposium on Theory of
  computing}, STOC '11, pages 773--782. ACM, 2011.

\bibitem{Gavinsky07}
D.~Gavinsky, J.~Kempe, I.~Kerenidis, R.~Raz, and R.~de~Wolf.
\newblock Exponential separations for one-way quantum communication complexity,
  with applications to cryptography.
\newblock In {\em Proceedings of the thirty-ninth annual ACM symposium on
  Theory of computing}, STOC '07, pages 516--525. ACM, 2007.

\bibitem{Gross07}
D.~Gross, K.~Audenaert, and J.~Eisert.
\newblock Evenly distributed unitaries: On the structure of unitary designs.
\newblock {\em Journal of Mathematical Physics}, 48(5):052104, 2007.

\bibitem{Harrow09_2}
A.~Harrow and R.~Low.
\newblock Efficient quantum tensor product expanders and k-designs.
\newblock In {\em Approximation, Randomization, and Combinatorial Optimization.
  Algorithms and Techniques}, volume 5687 of {\em LNCS}, pages 548--561. 2009.

\bibitem{Hayden04}
P.~Hayden, D.~Leung, P.~W. Shor, and A.~Winter.
\newblock Randomizing quantum states: Constructions and applications.
\newblock {\em Communications in Mathematical Physics}, 250:371--391, 2004.

\bibitem{Koenig11}
R.~Konig and R.~Renner.
\newblock Sampling of min-entropy relative to quantum knowledge.
\newblock {\em Information Theory, IEEE Transactions on}, 57(7):4760--4787,
  2011.

\bibitem{Koenig09}
R.~Konig, R.~Renner, and C.~Schaffner.
\newblock The operational meaning of min- and max-entropy.
\newblock {\em Information Theory, IEEE Transactions on}, 55(9):4337 --4347,
  sept. 2009.

\bibitem{Koenig08}
R.~Konig and B.~Terhal.
\newblock The bounded-storage model in the presence of a quantum adversary.
\newblock {\em Information Theory, IEEE Transactions on}, 54(2):749--762, 2008.

\bibitem{Nissan96}
N.~Nisan and D.~Zuckerman.
\newblock Randomness is linear in space.
\newblock {\em Journal of Computer and System Sciences}, 51(1):43 -- 52, 1996.

\bibitem{Olkiewicz99}
R.~Olkiewicz and B.~Zegarlinski.
\newblock Hypercontractivity in noncommutative {L}p spaces.
\newblock {\em Journal of Functional Analysis}, 161(1):246 -- 285, 1999.

\bibitem{Radhakrishnan00}
J.~Radhakrishnan and A.~Ta-Shma.
\newblock Bounds for dispersers, extractors, and depth-two superconcentrators.
\newblock {\em SIAM Journal on Discrete Mathematics}, 13(1):2--24, 2000.

\bibitem{Renner05}
R.~Renner.
\newblock {\em Security of Quantum Key Distribution}.
\newblock PhD thesis, ETH Zurich, 2005.

\bibitem{Sipser88}
M.~Sipser.
\newblock Expanders, randomness, or time versus space.
\newblock {\em Journal of Computer and System Sciences}, 36(3):379 -- 383,
  1988.

\bibitem{Szehr11}
O.~Szehr.
\newblock Decoupling theorems.
\newblock Master's thesis, ETH Zurich, 2011.

\bibitem{TaShma13}
A.~Ta-Shma.
\newblock Extractors against classical and quantum adversaries.
\newblock {\em Tutorial QCrypt}, 2013.

\bibitem{Tomamichel11}
M.~Tomamichel, C.~Schaffner, A.~Smith, and R.~Renner.
\newblock Leftover hashing against quantum side information.
\newblock {\em Information Theory, IEEE Transactions on}, 57(8):5524 --5535,
  aug. 2011.

\bibitem{Trevisan99}
L.~Trevisan.
\newblock Construction of extractors using pseudo-random generators (extended
  abstract).
\newblock In {\em Proceedings of the thirty-first annual ACM symposium on
  Theory of computing}, STOC '99, pages 141--148. ACM, 1999.

\bibitem{Vadhan11}
S.~P. Vadhan.
\newblock Pseudorandomness.
\newblock {\em Lecture notes}, 2011.

\end{thebibliography}


\end{document}